\documentclass[letterpaper,journal]{IEEEtran}
\usepackage{amsmath,amsfonts}
\usepackage{amssymb}
\usepackage{amsthm}
\usepackage{algorithmic}
\usepackage{algorithm}
\usepackage{array}
\usepackage[caption=false,font=normalsize,labelfont=rm,textfont=rm]{subfig}
\usepackage{textcomp}
\usepackage{stfloats}
\usepackage{url}
\usepackage{verbatim}
\usepackage{graphicx}
\usepackage{listings}
\usepackage{xcolor}
\usepackage{cite}
\usepackage{multirow}
\graphicspath{{fig/}}

\definecolor{codekeyword}{rgb}{0.13,0.29,0.53}
\definecolor{codecomment}{rgb}{0.00,0.50,0.00}
\definecolor{codestring}{rgb}{0.58,0.00,0.82}
\definecolor{codetype}{rgb}{0.15,0.55,0.55}
\definecolor{codenumber}{rgb}{0.50,0.50,0.50}
\lstdefinestyle{cppstyle}{
  language=C++,
  basicstyle=\ttfamily\footnotesize,
  keywordstyle=\color{codekeyword}\bfseries,
  commentstyle=\color{codecomment}\itshape,
  stringstyle=\color{codestring},
  numberstyle=\tiny\color{codenumber},
  emph={NodeId,MapHandle,Entry,StoredKey,StoredValue,%
        uint32_t,uint16_t,int16_t,kInvalidId,kEmptyMapHandle},
  emphstyle=\color{codetype},
  frame=single,
  columns=fullflexible,
  breaklines=true,
  showstringspaces=false,
  keepspaces=true,
}
\providecommand{\llbracket}{\ensuremath{[\![}}
\providecommand{\rrbracket}{\ensuremath{]\!]}}

\newtheorem{theorem}{Theorem}
\newtheorem{lemma}{Lemma}

\newcommand{\Norm}{\operatorname{Norm}}
\newcommand{\Cof}{\operatorname{Cof}}

\newcommand{\ite}{\mathrm{ite}}

\usepackage{hyperref}
\hypersetup{
  pdftitle={XBDD: A Highly Optimized ROBDD with Per-Edge Variable-Flip Maps},
  pdfauthor={Yinglong Gan, Jintao Yu, Shenggang Ying, Yusen Li, Xin Hong}
}

\begin{document}

\title{XBDD: A Highly Optimized ROBDD with Per-Edge Variable-Flip Maps}

\author{Yinglong Gan, Jintao Yu, Shenggang Ying, Yusen Li,
        and~Xin Hong%
\thanks{This work was supported in part
        by the Quantum Science and Technology---National Science and
        Technology Major Project under Grant 2024ZD0300502, and in part by
        the National Natural Science Foundation of China under Grant
        6267073710 and Grant 62272252.
        \textit{(Corresponding authors: Yusen Li; Xin Hong.)}}%
\thanks{Y. Gan and Y. Li are with the College of Computer Science,
        Nankai University, Tianjin 300350, China (e-mail:
        ganyl@nbjl.nankai.edu.cn; liyusen@nbjl.nankai.edu.cn).}%
\thanks{J. Yu is with Arclight Quantum Computing Inc., Beijing, China
        (e-mail: yujt@arclightquantum.com).}%
\thanks{S. Ying and X. Hong are with the Key Laboratory of System Software
        (Chinese Academy of Sciences), Beijing 100190, China, and also with
        the Institute of Software, Chinese Academy of Sciences, Beijing
        100190, China (e-mail: yingsg@ios.ac.cn; hongxin@ios.ac.cn).}}

% \markboth{IEEE Transactions on Computer-Aided Design of Integrated Circuits and Systems}%
% {Gan \MakeLowercase{\textit{et al.}}: XBDD: A Highly Optimized ROBDD with Per-Edge Variable-Flip Maps}

\maketitle

\begin{abstract}
The Reduced Ordered Binary Decision Diagram (ROBDD) is a canonical representation of Boolean functions and is widely used in tasks such as equivalence checking and satisfiability checking of combinational circuits. Classical ROBDD packages greatly improve the efficiency of building ROBDDs through a series of optimization techniques, and compress the node scale of the ROBDD through complement edges. However, existing implementations do not take into account the local polarity differences of isomorphic Boolean functions, and still produce a distinct node for each polarity combination, thereby causing an explosion in the number of nodes. This paper proposes XBDD, a highly optimized ROBDD that, on the basis of fully implementing complement edges and their accompanying engineering techniques, introduces a per-edge variable-flip map. XBDD attaches a flip map to each edge to indicate which input variables must be negated when that edge is followed. This allows nodes that differ only in local input polarities to be merged, further reducing the node count. For certain function families, this sharing even yields exponential compression. We also propose methods that use a bitmap and a map pool to substantially reduce the extra overhead brought by the map, and propose normalization and cofactor operators for the map. In addition, XBDD implements several other engineering optimizations to further improve both time and space efficiency. Experiments show that XBDD trades a controllable time cost for a significant space gain, validating the effectiveness of the per-edge variable-flip map.
\end{abstract}

\begin{IEEEkeywords}
Boolean functions, reduced ordered binary decision diagram, complement edges, per-edge variable-flip map, combinational circuits.
\end{IEEEkeywords}

\section{Introduction}

\IEEEPARstart{E}{fficient} representation and manipulation of Boolean functions are
fundamental to many electronic design automation (EDA) tasks, such as
logic synthesis and formal verification. Reduced Ordered Binary Decision
Diagrams (ROBDDs) provide a canonical graph-based representation under a
fixed variable order \cite{ref_bryant,ref_bryant_survey}. Once constructed,
equivalent functions have identical root references,
enabling constant-time equivalence checking \cite{ref_malik,ref_fujita_cmp}.
Satisfiability is likewise determined by comparing the root with the
constant-0 representation \cite{ref_bryant_heule}. As circuit complexity
grows, the cost of constructing and storing these representations becomes
critical.

Mature ROBDD packages combine several engineering optimizations
\cite{ref_brace}. One important technique among these is called
\emph{complement edges}, which allow a function and its complement to
share the same underlying nodes. Instead of constructing another graph
for $\overline{f}$, a complement edge stores a single bit indicating
whether the output of $f$ should be inverted. In this way, ROBDDs exploit
sharing between identical sub-functions, while complement edges further
extend sharing to complementary sub-functions.

Despite these optimizations, existing ROBDD representations may still miss important opportunities for node sharing. In particular, complement edges capture
differences in \emph{output polarity}, but cannot directly capture
differences in \emph{input polarity}. Consider the following two Boolean
functions:
\begin{equation}
	f_1(a,b)=a\land b,
	\qquad
	f_2(a,b)=\overline{a}\land b.
	\label{eq:intro-example}
\end{equation}
The two functions are neither identical nor complements of each other.
Therefore, conventional ROBDD reduction and complement edges cannot
directly merge their representations. However, their computations are
highly similar:
\begin{equation}
	f_2(a,b)=f_1(\overline{a},b).
\end{equation}
More generally, a sub-function may appear in many input-polarity variants
that are neither identical nor complementary. Representing these variants
separately limits sharing. This motivates encoding input-polarity
differences on edges, just as complement edges encode output polarity.

To address this problem, we propose \textbf{XBDD}, an extension of ROBDD
that associates a \emph{variable-flip map} with each edge. The map records
which input variables should be interpreted in the opposite polarity when
the edge is followed. Consequently, sub-functions that differ only in
selected input polarities can reuse the same underlying nodes, with
their differences represented compactly on the incoming edges. Conceptually,
XBDD extends the sharing capability of complement edges from
\emph{output-polarity variants} to \emph{input-polarity variants}.

Two challenges arise. First, different combinations of nodes, complement
bits, and flip maps may represent the same function. XBDD uses
normalization and map-aware cofactor operations to establish
\emph{constructive uniqueness} under a fixed variable order, preserving
equivalence checking by root-edge comparison for generated representations.

Second, map storage and manipulation may offset the sharing benefits.
XBDD combines bitmap encoding and map interning with compact pointer-free
nodes, level-partitioned unique tables, and adaptive caching and garbage
collection to control this overhead.

Variable-flip maps provide both theoretical and practical benefits. For a class of polarity-sensitive Boolean functions, XBDD reduces the representation size from exponential to linear compared with complement-edge-based ROBDDs. Experimental results on representative Boolean circuit benchmarks further demonstrate its practical effectiveness: compared with CUDD \cite{ref_cudd}, a mature complement-edge-based ROBDD package, XBDD achieves up to 93\% reduction in final node count and 61\% reduction in memory consumption, with construction time ranging from 24\% faster to 41\% slower. These results demonstrate that input-polarity similarity exposes substantial sharing opportunities that are missed by existing ROBDD representations.

The main contributions of this work are summarized as follows:

\begin{itemize}
	
	\item We identify \emph{input-polarity redundancy} as an important source of missed structural sharing in existing ROBDD representations. To exploit this opportunity, we introduce a per-edge variable-flip map that enables sub-functions differing only in input polarities to share the same underlying nodes.
	
	\item We develop normalization and map-aware cofactor operations to integrate variable-flip maps into the conventional ROBDD construction framework, and establish constructive uniqueness under a fixed variable order. We further show that, for a class of polarity-sensitive Boolean functions, XBDD reduces the representation size from exponential to linear.
	
	\item We develop a memory-efficient XBDD implementation that combines compact bitmap encoding and map interning with optimized node, unique-table, and computed-cache structures, keeping the additional memory and runtime overhead low.
	
	\item We evaluate XBDD against CUDD on IWLS'93 circuit benchmarks \cite{ref_iwls93}. The results show substantial reductions in node count and memory consumption with modest runtime overhead, while also validating the effectiveness of map reuse and the predicted exponential-to-linear reduction.
	
\end{itemize}

The remainder of this paper is organized as follows. Section~\ref{sec:background} reviews ROBDD fundamentals, ITE-based construction, and complement edges. Section~\ref{sec:map} presents the XBDD design, including per-edge variable-flip maps, normalization, construction, uniqueness, and the theoretical compression result. Section~\ref{sec:impl} describes the memory-efficient data structures and runtime optimizations used in the implementation. Section~\ref{sec:exp} evaluates XBDD experimentally and analyzes its space--time tradeoff, map interning efficiency, and exponential-compression case study. Section~\ref{sec:related} discusses related decision-diagram representations and implementations, and Section~\ref{sec:conclusion} concludes the paper.

\section{Background and Motivation}
\label{sec:background}

This section first reviews the basic concepts and reduction rules of ROBDDs. We then introduce ITE-based ROBDD construction. Finally, we discuss complement edges and their role in increasing node sharing, which motivates our work.

\subsection{Basic Concepts of ROBDD}
\label{subsec:xobdd}

A Binary Decision Diagram (BDD) represents a Boolean function as a rooted directed acyclic graph. Each non-terminal node is associated with a Boolean variable and tests whether the variable is assigned 0 or 1. Based on the assignment, evaluation proceeds along either the low (0) edge or the high (1) edge. This process continues until reaching a terminal node labeled 0 or 1, which determines the value of the Boolean function.

For example, consider the Boolean function $f(a,b)=a\land b$. Its BDD first checks $a$. If $a=0$, the result must be 0, regardless of
$b$. If $a=1$, the result depends on $b$, so the graph continues to the
node for $b$. In this sense, a BDD evaluates a Boolean function through a
sequence of simple decisions.

The important difference between a BDD and an ordinary decision tree is
that a BDD can \emph{share} repeated parts of the computation. Consider
two branches of a larger Boolean function that eventually need to evaluate
the same sub-function $g$. A decision tree may contain two separate copies
of $g$, whereas a BDD can let both branches point to the same node for $g$.
The graph can therefore be much smaller than the corresponding decision
tree.

An Ordered BDD (OBDD) requires variables to be tested in the same order
along every path. An OBDD is further reduced using two simple rules.
The first is the \emph{elimination rule}: if the low and high edges of a
single node point to the same child, the tested variable cannot affect the
result, so that node is removed and its incoming edges are redirected to the
common child. The second is the \emph{merging rule}: if two distinct nodes
test the same variable and have identical low children and identical high
children, the nodes represent the same sub-function and are replaced by one
shared node. Thus, the first rule removes a redundant variable test within
one node, whereas the second merges duplicate nodes. An OBDD to which neither
rule applies is called a Reduced Ordered BDD (ROBDD).

These reduction rules give ROBDDs an important property: under a fixed
variable order, each Boolean function has a unique ROBDD representation
\cite{ref_bryant}. Therefore, whenever the same sub-function appears
multiple times, all occurrences can share one node. Modern BDD packages
maintain a unique table to detect such equivalent nodes during construction,
and use a computed table to reuse previously computed Boolean operations.

\subsection{ROBDD Construction with ITE}
\label{subsec:ite}

The reduction rules described above define the canonical structure of an
ROBDD. In practice, however, a BDD package must maintain this structure
while constructing new Boolean functions. A widely used approach is to
express Boolean operations through a common recursive primitive, the
\emph{If-Then-Else} (ITE) operator \cite{ref_brace}. For three Boolean functions $F$, $G$,
and $H$, ITE is defined as
\begin{equation}
	\ite(F,G,H)=F\cdot G+\overline{F}\cdot H,
\end{equation}
which returns $G$ when $F=1$ and $H$ when $F=0$.

ITE provides a unified construction interface because common Boolean
operations can be expressed in terms of it, as shown in
Table~\ref{tab:ite}. For example, $F\land G$ can be written as
$\ite(F,G,0)$, while $\overline{F}$ can be written as $\ite(F,0,1)$.
Therefore, rather than implementing a separate recursive procedure for
each Boolean operator, a BDD package can construct most Boolean
expressions through the same ITE engine.

\begin{table}[!ht]
	\caption{Boolean Operations Expressed as ITE\label{tab:ite}}
	\centering
	\renewcommand{\arraystretch}{1.25}
	\begin{tabular}{|c|c|}
		\hline
		Operation & ITE form\\
		\hline
		F AND G & $\ite(F,G,0)$\\
		\hline
		F OR G & $\ite(F,1,G)$\\
		\hline
		F XOR G & $\ite(F,\overline{G},G)$\\
		\hline
		NOT(F) & $\ite(F,0,1)$\\
		\hline
		F NAND G & $\ite(F,\overline{G},1)$\\
		\hline
		F NOR G & $\ite(F,0,\overline{G})$\\
		\hline
	\end{tabular}
\end{table}

The recursive construction of ITE follows from Shannon expansion. Let
$v$ be the highest variable, according to the fixed variable order,
that appears in $F$, $G$, or $H$. We refer to $v$ as the
\emph{top variable}. For a Boolean function $F$, Shannon expansion with
respect to $v$ gives
\begin{equation}
	F=v\cdot F_v+\overline{v}\cdot F_{\overline v},
\end{equation}
where $F_v=F\big|_{v=1}$ and
$F_{\overline v}=F\big|_{v=0}$ denote the cofactors of $F$ with respect
to $v$, respectively. Applying the same decomposition to the
three arguments of ITE gives
\begin{equation}
	\begin{aligned}
		\ite(F,G,H)
		&=F\cdot G+\overline{F}\cdot H\\
		&=v\cdot\bigl(F_vG_v+\overline{F_v}H_v\bigr)
		+\overline{v}\cdot
		\bigl(F_{\overline v}G_{\overline v}
		+\overline{F_{\overline v}}H_{\overline v}\bigr)\\
		&=\ite\bigl(v,\,
		\ite(F_v,G_v,H_v),\,
		\ite(F_{\overline v},G_{\overline v},H_{\overline v})\bigr).
	\end{aligned}
	\label{eq:ite-rec}
\end{equation}

Equation~\eqref{eq:ite-rec} directly gives the recursive construction
procedure. ITE first computes the cofactors of $F$, $G$, and $H$ with
respect to the top variable $v$. It then recursively evaluates the two
branches, producing
\begin{equation}
	T=\ite(F_v,G_v,H_v)
\end{equation}
for $v=1$ and
\begin{equation}
	E=\ite(F_{\overline v},G_{\overline v},H_{\overline v})
\end{equation}
for $v=0$. If $T=E$, the test on $v$ is redundant and the result is
simply $T$. Otherwise, a node labeled by $v$ with children $E$ and $T$
is required.

For a conventional ROBDD, cofactor extraction is particularly simple.
If the root variable of an input node is $v$, its low and high edges
directly give the cofactors for $v=0$ and $v=1$, respectively. If its
root variable lies below $v$ in the variable order, the represented
sub-function is independent of $v$ at the current level, and the same
node is used for both cofactors.

The newly obtained pair $(E,T)$ is then reduced and canonicalized during
node construction. If $E=T$, the node is eliminated by the ROBDD
reduction rule. Otherwise, the unique table is queried using
$(v,E,T)$ as the structural key. An existing node is reused whenever
the same key has already been created; only a previously unseen key
causes a new node to be allocated. In this way, equivalent
sub-functions constructed through different recursive paths converge
to the same ROBDD node. A computed table further caches previously
evaluated ITE calls and avoids repeating the same recursive
computation.

The standard ITE construction algorithm is given in \cite[Fig.~2]{ref_brace}. Terminal
cases and computed-table lookups terminate the recursion whenever the
result is already known. Otherwise, ITE repeatedly performs three key
steps: \emph{cofactor extraction}, \emph{recursive construction}, and
\emph{reduction and canonicalization through the unique table}.

\subsection{Complement Edges in ROBDDs}
\label{subsec:complement}

The ROBDD construction described above maximizes sharing among
\emph{identical} sub-functions: once a node representing a sub-function
has been created, all later occurrences can reuse the same node through
the unique table. However, two sub-functions that differ only in their
output polarity are still treated as different Boolean functions. For
example, if one part of a graph requires a function $g$ while another
requires its complement $\overline{g}$, a conventional ROBDD would in
general need separate representations for the two.

Complement edges extend this sharing capability by moving the output
polarity difference from the graph structure to the edge
\cite{ref_brace}. Instead of constructing a separate graph for
$\overline{g}$, an edge carries an additional complement bit. When the
bit is 0, the target node is interpreted normally as $g$; when it is 1,
the output of the target node is complemented and interpreted as
$\overline{g}$. Therefore, $g$ and $\overline{g}$ can share exactly the
same underlying nodes.

For example, suppose a node represents
\begin{equation}
	g(a,b)=a\land b.
\end{equation}
An ordinary incoming edge refers to $g(a,b)$, whereas a complemented
incoming edge refers to
\begin{equation}
	\overline{g(a,b)}
	=\overline{a\land b}.
\end{equation}
No additional nodes are required for the complemented function.

The benefit of complement edges is that a very small amount of edge
metadata can eliminate potentially large duplicated subgraphs. More
importantly, they illustrate a general principle for increasing
structural sharing in decision diagrams: when two functions have the
same underlying structure but differ by a simple transformation, the
difference can sometimes be encoded on the incoming edge rather than
in separate nodes.

However, complement edges capture only differences in
\emph{output polarity}. They cannot directly merge sub-functions that
have the same underlying structure but differ in the polarity of one or
more \emph{input variables}. This remaining form of redundancy motivates the variable-flip maps introduced in the next section.

\section{XBDD Design}
\label{sec:map}

Complement edges extend ROBDD sharing to functions with opposite
output polarities, but they do not capture structural similarity caused
by differences in input polarity. XBDD addresses this limitation by
encoding input-polarity transformations on edges, allowing a broader
class of structurally similar sub-functions to share the same nodes.

This section presents the design of XBDD. We first define the
per-edge variable-flip map and the semantics of the extended edge
representation. We then introduce normalization and map-aware cofactor
operations, and show how they are integrated into the conventional ITE
construction framework. Next, we establish the constructive uniqueness
of XBDD under a fixed variable order and analyze the node-sharing
benefits enabled by variable-flip maps.

\subsection{Per-Edge Variable-Flip Map}
\label{subsec:edge}

Complement edges improve node sharing by encoding an
\emph{output-polarity} difference on an edge. XBDD follows the same
general principle, but extends it to \emph{input-polarity} differences.
Instead of constructing separate nodes for sub-functions that have the
same Boolean structure but interpret some input variables with opposite
polarities, XBDD stores the polarity transformation on the incoming
edge.

To represent such transformations, each XBDD edge carries a
\emph{variable-flip map}. Let
\begin{equation}
	V=\{x_1,\ldots,x_n\}
\end{equation}
be the Boolean variable set. A flip map $M\subseteq V$ specifies the
variables whose values are interpreted in the opposite polarity when
an edge is followed. For example,
\begin{equation}
	M=\{x_2,x_5\}
\end{equation}
means that $x_2$ and $x_5$ are flipped, while all other variables are
interpreted normally.

An XBDD edge is represented by the triple
\begin{equation}
	e=(u,p,M),
	\label{eq:xbdd-edge}
\end{equation}
where $u$ is the target node, $p\in\{0,1\}$ is the conventional
complement bit, and $M$ is the variable-flip map. The two edge
attributes have complementary roles: $M$ transforms selected
\emph{inputs} before the target node is evaluated, whereas $p$
optionally complements the \emph{output} afterward.

More formally, let $\Phi_u$ denote the Boolean function represented by
node $u$, and let $\chi_M$ be the bit vector whose set bits correspond
to the variables in $M$. The function represented by edge
$e=(u,p,M)$ is
\begin{equation}
	\llbracket e \rrbracket(x)
	=
	p\oplus\Phi_u(x\oplus\chi_M).
	\label{eq:edge-sem}
\end{equation}
Thus, following an XBDD edge can be understood as three conceptual
steps: first flip the variables specified by $M$, then evaluate the
target node $u$, and finally complement the result if $p=1$.

This definition naturally generalizes conventional complement edges.
When $M=\varnothing$, Eq.~\eqref{eq:edge-sem} reduces to
\begin{equation}
	\llbracket (u,p,\varnothing)\rrbracket(x)
	=
	p\oplus\Phi_u(x),
\end{equation}
which is exactly the semantics of a complement edge. XBDD therefore
preserves output-polarity sharing while additionally enabling
input-polarity sharing.

For example, suppose node $u$ represents
\begin{equation}
	\Phi_u(a,b)=a\land b.
\end{equation}
Then the same node can represent several polarity variants through
different incoming edges:
\begin{equation}
	\begin{aligned}
		(u,0,\varnothing)
		&\rightarrow a\land b,\\
		(u,1,\varnothing)
		&\rightarrow \overline{a\land b},\\
		(u,0,\{a\})
		&\rightarrow \overline{a}\land b.
	\end{aligned}
	\label{eq:edge-example}
\end{equation}
The first two forms are already supported by complement-edge ROBDDs.
The third illustrates the additional sharing enabled by XBDD: the
polarity of $a$ changes, but the underlying node structure is reused.

An XBDD object consists of a shared node DAG $\mathcal G$ and a root
edge:
\begin{equation}
	\mathcal X=(\mathcal G,e_{\mathrm{root}}),
	\qquad
	e_{\mathrm{root}}
	=(u_{\mathrm{root}},p_{\mathrm{root}},M_{\mathrm{root}}).
\end{equation}
The root edge has no source node and determines the Boolean function
represented by the XBDD. Different XBDD objects, such as different
outputs of a multi-output circuit, may use different root edges while
sharing the same underlying nodes. Fig.~\ref{fig:struct} illustrates
this organization.

XBDD uses a single terminal node $\top$. The edge
$(\top,0,\varnothing)$ represents constant 1, whereas
$(\top,1,\varnothing)$ represents constant 0. Since a constant does
not depend on any input variable, attaching a variable-flip map to a
terminal edge has no semantic effect. Terminal edges are therefore
normalized as
\begin{equation}
	(\top,p,M)
	\rightarrow
	(\top,p,\varnothing).
	\label{eq:tcanon}
\end{equation}

To further reduce the number of equivalent structural forms, XBDD
uses a normalized internal-node representation. For every nonterminal
node, the low edge is required to be regular and carries neither a
complement bit nor a flip map. A node whose decision variable is $v$
therefore has the form
\begin{equation}
	u=(v,l,h,q,D),
	\label{eq:node-structure}
\end{equation}
where $l$ and $h$ are the low and high child nodes, respectively, and
$(q,D)$ are the complement bit and variable-flip map associated with
the high edge. Equivalently, the low edge is
$(l,0,\varnothing)$ and the high edge is $(h,q,D)$.

This asymmetric representation does not reduce expressiveness: common
complement and flip information from the two outgoing edges can be
moved to the incoming edge of the node. More importantly, it gives XBDD
a standard structural form that can be used as the key of the unique
table. The next subsection describes how XBDD transforms arbitrary
pairs of cofactors into this normalized form and how the corresponding
flip information is recovered during recursive construction.

\begin{figure}[!t]
	\centering
	\includegraphics[width=0.4\columnwidth]{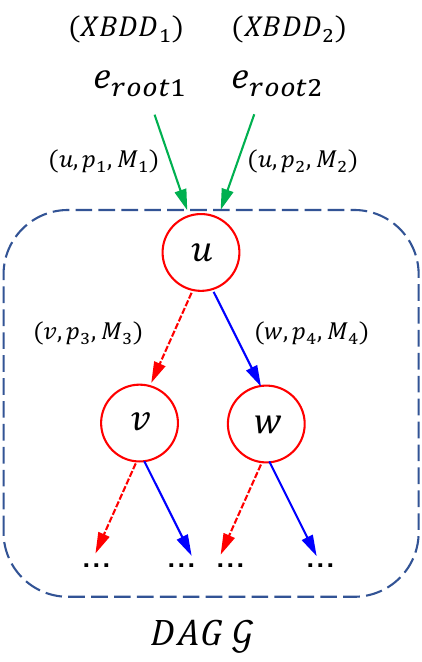}
	\caption{Structure of XBDD objects. Each XBDD object maintains its own
		root edge, while multiple root edges may reference nodes in the same
		shared DAG $\mathcal G$. Every edge is represented by a target node, a
		complement bit, and a variable-flip map.}
	\label{fig:struct}
\end{figure}

\subsection{Normalization and Cofactor}
\label{subsec:norm}

Introducing variable-flip maps increases the number of ways in which
the same underlying Boolean structure can be represented. For example,
two Shannon branches may appear in opposite orders because the polarity
of the current variable is reversed, or the two outgoing edges may
carry common complement or flip information that can be moved to the
incoming edge. If these equivalent forms are inserted into the unique
table directly, structurally equivalent sub-functions may still be
stored as different nodes.

XBDD therefore normalizes every newly constructed node before the
unique-table lookup. The goal of normalization is to separate the
\emph{shared structure} of a node from the transformations applied to
that structure. Branch ordering is fixed deterministically, while
common output-polarity and input-polarity transformations are lifted
from the outgoing edges to the incoming edge. As a result, functions
that differ only by such transformations can reach the same structural
key and reuse the same node.

We denote this normalization operator by $\Norm$. Suppose the low and
high cofactors of variable $v$ are represented by
\begin{equation}
	e_0=(l,p_l,M_l),
	\qquad
	e_1=(h,p_h,M_h).
\end{equation}
$\Norm(v,e_0,e_1)$ constructs an XBDD representation of the Shannon
combination
\begin{equation}
	F=
	\overline{x_v}\llbracket e_0\rrbracket
	+
	x_v\llbracket e_1\rrbracket.
	\label{eq:shannon-comb}
\end{equation}
The child nodes $l$ and $h$ are assumed to lie below $v$ in the
variable order, and $v\notin M_l\cup M_h$. This condition is naturally
satisfied during recursive construction because the cofactor operation
removes the current variable from the incoming map.

Conceptually, normalization performs four main transformations: \emph{redundancy elimination}, \emph{branch ordering}, \emph{attribute lifting}, and
\emph{structural interning}.

\paragraph{Redundancy elimination}
If the two input edges are identical, i.e., $e_0=e_1$, the Shannon test
on $v$ is redundant and $\Norm$ simply returns $e_0$.

\paragraph{Branch ordering}
Input-polarity reversal of the current variable exchanges the two
Shannon branches. To ensure that both branch orders lead to the same
node structure, XBDD imposes a deterministic order on the two child
nodes. Each node is assigned a unique identifier, and the terminal
$\top$ is treated as having the smallest identifier. If the low child
has a larger identifier than the high child, the two edges are swapped.
Let
\begin{equation}
	s=
	\begin{cases}
		1, & \text{if the two branches are swapped},\\
		0, & \text{otherwise}.
	\end{cases}
\end{equation}

Swapping the branches changes the interpretation of the current
variable. XBDD restores the original semantics in
\hyperref[step:attribute-lifting]{Step~c} according to the recorded value
of $s$. In this way, two functions that differ only in the polarity of
$v$ can share the same ordered pair of child nodes.

\paragraph{Attribute lifting}
\label{step:attribute-lifting}
After the branches have been ordered, XBDD removes transformations that
are common to both outgoing edges and moves them to the incoming edge.
For complement bits, the complement bit of the low edge is lifted:
\begin{equation}
	p_{\mathrm{out}}=p_l,
	\qquad
	q=p_h\oplus p_l.
\end{equation}
The low edge therefore becomes regular, while the high edge stores only
the complement difference $q$ relative to the low edge.

The same principle is applied to variable-flip maps. In the general
case,
\begin{equation}
	M_{\mathrm{out}}=M_l,
	\qquad
	D=M_l\triangle M_h,
	\label{eq:common-map}
\end{equation}
where $\triangle$ denotes symmetric difference. Thus, the incoming edge
carries the transformation of the low branch, while the high edge keeps
only the map difference relative to it.

When the ordered low child is the terminal node $\top$ and the high
child is nonterminal, the low branch is independent of all variables.
Its map therefore provides no meaningful reference transformation. In
this case, XBDD instead lifts the map of the high branch:
\begin{equation}
	M_{\mathrm{out}}=M_h,
	\qquad
	D=\varnothing.
	\label{eq:const-lift}
\end{equation}
We refer to this rule as \emph{constant low-child lifting}.

Finally, if the branches were swapped during branch ordering, i.e.,
$s=1$, the current variable is added to the incoming map:
\begin{equation}
	M_{\mathrm{out}}
	\leftarrow
	M_{\mathrm{out}}\cup\{v\}.
\end{equation}
This records the branch reversal as an input-polarity transformation
rather than as a different node structure.

\paragraph{Structural interning}
After normalization, the low edge has the canonical form
$(l,0,\varnothing)$ and the high edge has the form $(h,q,D)$. The node
can therefore be identified by the structural key
\begin{equation}
	(v,l,h,q,D).
	\label{eq:struct-key}
\end{equation}
XBDD looks up this key in the unique table. If an identical key already
exists, the corresponding node is reused; otherwise, a new node is
created. The normalized result is returned through the incoming edge
\begin{equation}
	e_{\mathrm{out}}
	=
	(u,p_{\mathrm{out}},M_{\mathrm{out}}).
\end{equation}

The normalization procedure preserves the represented Boolean
function.

\begin{theorem}[Semantics preservation]
	\label{thm:norm-semantics}
	For any valid inputs $e_0$ and $e_1$,
	\begin{equation}
		\llbracket\Norm(v,e_0,e_1)\rrbracket
		=
		\overline{x_v}\llbracket e_0\rrbracket
		+
		x_v\llbracket e_1\rrbracket.
	\end{equation}
\end{theorem}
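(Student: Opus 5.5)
The proof will unfold both sides pointwise using the edge semantics \eqref{eq:edge-sem} and the node semantics implied by the normalized form \eqref{eq:node-structure}. For a node $u=(v,l,h,q,D)$ this semantics is
\begin{equation*}
\Phi_u(x)=\overline{x_v}\,\Phi_l(x)+x_v\bigl(q\oplus\Phi_h(x\oplus\chi_D)\bigr).
\end{equation*}
The structural-interning step is handled first. Any node retrieved from the unique table has exactly the key $(v,l,h,q,D)$ computed by normalization, so its $\Phi_u$ is the function displayed above, whether the node is reused or newly created.

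Two auxiliary facts are needed. The first is that $\Phi_w$ depends only on variables at or below the level of $w$. This follows by structural induction on the DAG, since maps stored below $v$ never include $v$ under the validity hypothesis. The second is that $v\notin M_l\cup M_h$. Together they give $\Phi_l(x\oplus\chi_{\{v\}})=\Phi_l(x)$ and $\Phi_h(x\oplus\chi_{\{v\}})=\Phi_h(x)$, and flipping by $M_l$ or $M_h$ leaves $x_v$ unchanged. With these facts, the redundancy-elimination case $e_0=e_1$ is immediate, because $\overline{x_v}g+x_vg=g$.

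For the non-redundant case I will first treat $s=0$ and split on the lifting rule.

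\emph{General lifting.} Here $p_{\mathrm{out}}=p_l$, $M_{\mathrm{out}}=M_l$, $q=p_l\oplus p_h$ and $D=M_l\triangle M_h$. Expanding $p_l\oplus\Phi_u(x\oplus\chi_{M_l})$ and using that $x_v$ is unaffected by $\chi_{M_l}$ gives the following.
\begin{itemize}
\item On $x_v=0$ the value is $p_l\oplus\Phi_l(x\oplus\chi_{M_l})=\llbracket e_0\rrbracket(x)$.
\item On $x_v=1$ the value is $p_l\oplus q\oplus\Phi_h(x\oplus\chi_{M_l}\oplus\chi_D)$. Since $\chi_{M_l}\oplus\chi_D=\chi_{M_h}$ and $p_l\oplus q=p_h$, this equals $\llbracket e_1\rrbracket(x)$.
\end{itemize}
The XOR-over-a-Shannon-combination step is valid because exactly one branch is active.

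\emph{Constant low-child lifting.} Here $l=\top$ and the low map is irrelevant, consistent with the terminal normalization \eqref{eq:tcanon}. Since $\Phi_\top$ is constant, $\Phi_l(x\oplus\chi_{M_h})=\Phi_l(x)$. With $D=\varnothing$ and $M_{\mathrm{out}}=M_h$, the high branch evaluates to $\Phi_h(x\oplus\chi_{M_h})$ as required, and the complement bits work out exactly as in the general case.

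For $s=1$, I will apply the $s=0$ argument to the swapped pair $(e_1,e_0)$. This shows that the edge $(u,p_{\mathrm{out}},M')$, taken before $v$ is added to the map, represents $G=\overline{x_v}\llbracket e_1\rrbracket+x_v\llbracket e_0\rrbracket$. Adding $v$ to the map yields $G(x\oplus\chi_{\{v\}})$. By the level-independence fact, and because $v\notin M_l\cup M_h$, flipping $x_v$ leaves both $\llbracket e_0\rrbracket$ and $\llbracket e_1\rrbracket$ unchanged. The flip therefore only exchanges the selectors $x_v$ and $\overline{x_v}$, which gives the claimed combination. The one point to check is that $M'$ does not already contain $v$, so that $\cup\{v\}$ really acts as a toggle; this holds because $M'\subseteq M_l\cup M_h$.

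I expect the main obstacle to be making the level-independence lemma rigorous. The statement silently relies on the invariant that every stored map $D$ and every child lies strictly below the node's variable. I would state this invariant explicitly as part of ``valid inputs'' and discharge it by induction on construction, noting that normalization itself only produces $D\subseteq M_l\cup M_h$, which excludes $v$. Once this invariant is in place, the remaining steps are bitwise XOR bookkeeping.
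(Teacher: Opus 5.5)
Your proof is correct and follows the paper's own case analysis: complement lifting, lifting $M_l$ with $D=M_l\triangle M_h$, constant low-child lifting, and undoing the swap by adding $v$ to the map. You carry it out pointwise and make explicit the independence of $\llbracket e_0\rrbracket,\llbracket e_1\rrbracket$ from $x_v$, which the paper leaves implicit.

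The obstacle you anticipate is not a real one. For every $D$, $x\mapsto g(x\oplus\chi_D)$ has the same support as $g$; this is the paper's remark that flip maps introduce no variable dependencies. So level-independence follows by structural induction from the children lying below their parent alone, and no invariant on stored maps is needed. Your proposed discharge would not have supplied that invariant anyway: the fact that $D\subseteq M_l\cup M_h$ excludes $v$ only keeps a node's own variable out of its $D$, not the higher variables your induction was meant to exclude.
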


\begin{proof}
	Lifting the same complement transformation from both branches to the
	incoming edge does not change the represented function. Similarly,
	$M_l\triangle M_h$ records exactly the input-polarity difference
	remaining after the transformation $M_l$ has been lifted to the
	incoming edge. If the two branches are exchanged, adding $v$ to the
	incoming flip map reverses the interpretation of $x_v$ and therefore
	restores the original Shannon decomposition. In the constant
	low-child case, the terminal branch is independent of all input
	variables, so lifting the high-branch map leaves the semantics
	unchanged.
\end{proof}

\medskip
\noindent\textbf{Map-aware cofactor.}
Normalization combines two cofactors into one normalized XBDD edge.
Recursive construction also requires the reverse operation: given an
edge $e=(u,p,M)$, we must obtain the sub-function produced by fixing the
current variable $x_v$ to either 0 or 1. We denote this operation by
$\Cof(e,v,c)$, where $c\in\{0,1\}$.

The important difference from a conventional ROBDD is that the incoming
flip map may reverse which outgoing branch corresponds to a given value
of $x_v$. Let
\begin{equation}
	b=[v\in M].
\end{equation}
When computing the cofactor for $x_v=c$, the branch seen by node $u$ is
therefore
\begin{equation}
	j=c\oplus b.
	\label{eq:branch-selection}
\end{equation}
Thus, when $v\notin M$, the conventional branch $c$ is selected; when
$v\in M$, the low and high branches are exchanged.

If the root variable of $u$ lies below $v$, then the node itself does
not test $x_v$. In this case, fixing $x_v$ only consumes the
corresponding flip information, and
\begin{equation}
	\Cof(e,v,c)
	=
	(u,p,M\setminus\{v\}).
\end{equation}
If $u=\top$, the terminal normalization rule further clears the map,
yielding $(\top,p,\varnothing)$.

Now consider the case where the root variable of $u$ is exactly $v$.
Recall that the normalized node has low edge
$(l,0,\varnothing)$ and high edge $(h,q,D)$. Let $u_0=l$ and $u_1=h$.
After selecting branch $j$ according to
Eq.~\eqref{eq:branch-selection}, the cofactor is
\begin{equation}
	\Cof(e,v,c)
	=
	\left(
	u_j,\;
	p\oplus [j=1]q,\;
	\bigl(M\triangle [j=1]D\bigr)\setminus\{v\}
	\right),
	\label{eq:cof}
\end{equation}
where $[j=1]q$ denotes $q$ when the high branch is selected and
$0$ otherwise, while $[j=1]D$ denotes $D$ when the high branch is
selected and $\varnothing$ otherwise. If the selected child is the terminal node,
its map is again cleared according to Eq.~\eqref{eq:tcanon}.

Equation~\eqref{eq:cof} can be interpreted in three steps. First, the
incoming map determines whether the requested cofactor selects the low
or high branch. Second, if the high branch is taken, its relative
complement bit and flip map are composed with those of the incoming
edge. Finally, $v$ is removed from the map because its value has already
been fixed.

The operators $\Norm$ and $\Cof$ therefore play complementary roles.
$\Norm$ combines two cofactors into a normalized shared structure,
while $\Cof$ recovers the corresponding branches during recursive
construction. This relationship is the basis of the XITE algorithm
introduced in the next subsection.

\subsection{XBDD Construction}
\label{subsec:construct}

With the edge representation and the two operations $\Norm$ and $\Cof$
defined above, XBDD can be constructed within the same recursive
framework as a conventional ROBDD. In particular, XBDD retains the ITE
operator as its common construction primitive and expresses Boolean
operations using the same ITE forms listed in Table~\ref{tab:ite}.

The key difference is that the two operations in conventional ITE that
directly manipulate BDD edges must now account for variable-flip maps.
First, ordinary cofactor extraction is replaced by the map-aware
operator $\Cof$, which determines the correct branch after considering
the incoming flip map and composes the edge attributes accordingly.
Second, the two recursively constructed branches are not inserted into
the unique table directly. Instead, they are first passed to $\Norm$,
which removes redundant transformations, imposes a canonical branch
ordering, and returns a normalized edge whose underlying node can be
shared through the unique table.

In other words, the recursive structure of ITE remains unchanged. We refer to the ITE procedure specialized for XBDD as $\textsc{XITE}$. Algorithm~\ref{alg:xite} summarizes the construction. Given three XBDD edges $F$, $G$, and $H$, XITE first handles terminal cases and checks
the computed table, as in conventional ITE. Otherwise, it selects the
top variable $v$ among the three arguments. In XBDD, variables at higher
levels are closer to the root; thus, selecting the top variable amounts
to selecting the variable with the largest level. The map-aware cofactors for $v=1$ and $v=0$ are then computed using $\Cof$, and the two resulting subproblems are solved recursively. Finally, the two recursive results are combined by $\Norm(v,E,T)$. Here, both the $\Cof$ and $\Norm$ operators are defined in
Section~\ref{subsec:norm}.

\begin{algorithm}[!t]
	\caption{$\textsc{XITE}(F,G,H)$}
	\label{alg:xite}
	\begin{algorithmic}[1]
		\IF{terminal case}
		\STATE \textbf{return} result
		\ELSIF{computed-table has entry $\{F,G,H\}$}
		\STATE \textbf{return} cached result
		\ELSE
		\STATE let $v$ be the top variable of $\{F,G,H\}$
		\STATE $F_v,G_v,H_v
		\gets \Cof(F,v,1),\Cof(G,v,1),\Cof(H,v,1)$
		\STATE $F_{\overline v},G_{\overline v},H_{\overline v}
		\gets \Cof(F,v,0),\Cof(G,v,0),\Cof(H,v,0)$
		\STATE $T \gets \textsc{XITE}(F_v,G_v,H_v)$
		\STATE $E \gets
		\textsc{XITE}(F_{\overline v},
		G_{\overline v},
		H_{\overline v})$
		\STATE $R \gets \Norm(v,E,T)$
		\label{line:xite-norm}
		\STATE $\textsc{insert\_computed\_table}(\{F,G,H\},R)$
		\STATE \textbf{return} $R$
		\ENDIF
	\end{algorithmic}
\end{algorithm}

XBDD does not require a separate construction algorithm for each Boolean operator. AND, OR, XOR, NOT, and other operators are translated into XITE calls in exactly the same way that they are translated into ITE calls for conventional ROBDDs. Thus, the variable-flip map extends the representation and the local edge operations without changing the high-level Boolean construction interface.

The normalization performed at the end of every nonterminal XITE call
also provides the foundation for representation uniqueness. In the
next subsection, we show that, under a fixed variable order, edges
generated through this construction procedure have a unique
representation for their Boolean functions.

\subsection{Constructive Uniqueness}
\label{subsec:canonical}

A key property of conventional ROBDDs is canonicity: under a fixed
variable order, each Boolean function has a unique reduced
representation. Variable-flip maps make this property more subtle.
Because an XBDD edge contains a target node, a complement bit, and a
flip map, the same Boolean function may admit multiple syntactically
valid edge triples.

XBDD therefore does not claim \emph{construction-independent strong
	canonicity} over all possible well-formed XBDD edges. Instead, it
guarantees \emph{constructive uniqueness}: every Boolean function built
through the XITE construction procedure has a unique generated edge.
This guarantee is sufficient for normal package use, because all
Boolean expressions exposed through the XBDD interface are constructed
through XITE. Consequently, equivalence between two constructed
functions can still be determined by comparing their root edges.

Let $\mathcal G$ denote the set of all edges generated by XBDD
construction. It contains the normalized constant and variable edges,
all results returned by XITE, and all intermediate edges returned by
recursive XITE calls. The uniqueness result below is stated only for
edges in $\mathcal G$.

The proof relies on one central property of the construction:
normalization and cofactor extraction are mutually consistent. In
particular, a generated nonterminal edge can be reconstructed uniquely
from its two cofactors.

\begin{lemma}[Reconstruction of a generated edge]
	\label{lem:generated-reconstruction}
	Let $R\in\mathcal G$ be a nonterminal edge whose root variable is $v$.
	Define
	\begin{equation}
		R_c=\Cof(R,v,c),
		\qquad c\in\{0,1\}.
	\end{equation}
	Then $R_0,R_1\in\mathcal G$, and
	\begin{equation}
		R=\Norm(v,R_0,R_1).
		\label{eq:generated-reconstruction}
	\end{equation}
\end{lemma}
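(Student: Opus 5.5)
The plan is to reduce the lemma to a single algebraic identity, $\Cof(\Norm(v,e_0,e_1),v,c)=e_c$, and then use the fact that every generated nonterminal edge is, at some point, the output of a nonredundant call to $\Norm$ at its own root variable.

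\emph{Origin of $R$.} First I would show by induction on the order in which edges enter $\mathcal G$ that every nonterminal $R=(u,p,M)\in\mathcal G$ whose node $u$ is labeled $v$ equals $\Norm(v,e_0,e_1)$ for some $e_0,e_1\in\mathcal G$ with $e_0\neq e_1$.

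The constant edges are terminal, so they need no argument. A variable edge for $x_v$ is taken to be $\Norm(v,(\top,1,\varnothing),(\top,0,\varnothing))$; if the package builds it differently, I would check by hand that it has this normal form.

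For a result $R$ of XITE, there are three cases.
\begin{itemize}
\item If $R$ is a terminal-case return or a computed-table hit, it equals an edge generated earlier, and the induction hypothesis applies.
\item Otherwise $R=\Norm(w,E,T)$ with $E,T\in\mathcal G$ the recursive results. If this call hits the redundancy rule, then $R=E$ was generated earlier, and again the hypothesis applies.
\item If the call is nonredundant, the returned node is labeled $w$, so $w=v$.
\end{itemize}
I also need that the unique table never returns a node whose key differs from the computed key $(v,l,h,q,D)$. This is immediate, because a hit means the keys are identical. Hence the stored fields of $u$ coincide with those just computed by $\Norm$.

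I would also record two input invariants: $v\notin M_l\cup M_h$, and every terminal edge in $\mathcal G$ has the form $(\top,p,\varnothing)$. Both follow from $\Cof$ deleting $v$ and from the terminal rule in Eq.~\eqref{eq:tcanon}.

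\emph{Core identity.} Next I would prove $\Cof(\Norm(v,e_0,e_1),v,c)=e_c$ for $c\in\{0,1\}$ whenever $e_0\neq e_1$ satisfy the invariants above. I would split into the swap case $s=0$ or $s=1$, crossed with whether constant low-child lifting applies.

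\begin{itemize}
\item \emph{General case, $s=0$.} Here $v\notin M_{\mathrm{out}}=M_l$, so $j=c$.
  \begin{itemize}
  \item For $c=0$, Eq.~\eqref{eq:cof} gives $(l,p_l,M_l\setminus\{v\})=e_0$.
  \item For $c=1$, it gives $(h,\,p_l\oplus p_l\oplus p_h,\,M_l\triangle(M_l\triangle M_h))=(h,p_h,M_h)=e_1$, using $v\notin M_h$.
  \end{itemize}
\item \emph{General case, $s=1$.} The stored low and high branches are the original $e_1$ and $e_0$, and $v\in M_{\mathrm{out}}$, so $j=c\oplus 1$. The same computation then returns the original $e_c$. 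Removing $v$ at the end cancels the added $\{v\}$, because neither original map contains $v$.
\item \emph{Constant low-child lifting.} The low child is $\top$ with incoming map $M_h$, where $M_h$ here denotes the map of the ordered nonterminal branch.
  \begin{itemize}
  \item Selecting the terminal branch gives $(\top,p_{\mathrm{out}},\cdot)$. Its map is cleared, which recovers the original terminal edge precisely because terminal edges in $\mathcal G$ already carry $\varnothing$.
  \item Selecting the high branch gives $(h,\,p_{\mathrm{out}}\oplus q,\,M_h\triangle\varnothing\setminus\{v\})$, which equals the original nonterminal edge.
  \end{itemize}
\end{itemize}

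\emph{Conclusion.} Combining the two steps gives $R_c=\Cof(R,v,c)=e_c\in\mathcal G$. Then $\Norm(v,R_0,R_1)=\Norm(v,e_0,e_1)=R$. This last equality uses that $\Norm$ is a deterministic function of its inputs together with the current unique table, and the table maps the same key to the same node.

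\emph{Main obstacle.} I expect the delicate part to be the first step, not the algebra. One must argue that a generated edge pointing to a $v$-labeled node really was produced by a nonredundant $\Norm$ at level $v$, and not merely passed through caches or redundancy returns from higher levels. One must also confirm that the invariants on the inputs (no $v$ in the maps, empty maps on terminal edges) hold for every call. If the variable-edge constructor bypasses $\Norm$, I would treat it as an extra base case and verify the identity for it directly.
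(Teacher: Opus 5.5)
Your proposal is correct and follows the same route as the paper. $R$ arises from a nonredundant call $\Norm(v,e_0,e_1)$, $\Cof$ undoes that call so that $R_c=e_c$, and the determinism of $\Norm$ and the unique table then returns $R$. You make explicit the case analysis (swap, constant low-child lifting, empty maps on terminal edges, cache and redundancy pass-throughs) that the paper compresses into one sentence.

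Your origin step needs one repair, which the paper also glosses over. Terminal cases such as $\textsc{XITE}(F,0,1)=\overline{F}$, or returns of complemented arguments as in XOR, give the complement of an earlier edge rather than an earlier edge. So you need the identity $\Norm(v,\overline{e_0},\overline{e_1})=\overline{\Norm(v,e_0,e_1)}$, which is easy because branch order depends only on node identifiers and $q,D$ are differences. You also need to read $\mathcal G$ as closed under complementation, since otherwise even $R_c\in\mathcal G$ can fail.
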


\begin{proof}
	A nonterminal edge generated by XITE is obtained by recursively
	constructing two branch results and then applying
	$\Norm(v,\cdot,\cdot)$. By the definition of $\Cof$, applying
	$\Cof(R,v,0)$ and $\Cof(R,v,1)$ reverses the transformations introduced
	by normalization: it restores any branch exchange, composes the
	relative complement and flip attributes, and removes the current
	variable from the map. Therefore, the two cofactors recovered from $R$
	are exactly the two recursive branch results used to construct it.
	
	Applying $\Norm$ to these recovered cofactors performs the same
	deterministic branch ordering, attribute lifting, and unique-table
	lookup as in the original construction, and therefore returns the same
	generated edge $R$.
\end{proof}

We can now establish uniqueness for all generated XBDD edges.

\begin{theorem}[Constructive uniqueness of XBDD]
	\label{thm:constructive-uniqueness}
	Under a fixed variable order, for any two generated edges
	$R,S\in\mathcal G$,
	\begin{equation}
		R=S
		\quad\Longleftrightarrow\quad
		\llbracket R\rrbracket
		=
		\llbracket S\rrbracket.
		\label{eq:constructive-uniqueness}
	\end{equation}
\end{theorem}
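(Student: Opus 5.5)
The forward direction is immediate: equal edges denote the same function by Eq.~\eqref{eq:edge-sem}. For the converse I would argue by strong induction on a measure of the generated edges, namely the number of variables at or above the root variable. Terminal edges get level $0$; a nonterminal edge gets the level of its root variable plus one. The inductive hypothesis is that \eqref{eq:constructive-uniqueness} holds for all pairs of generated edges whose measures are both below some bound.

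\emph{Base case.} Suppose $R$ and $S$ both target $\top$. By Eq.~\eqref{eq:tcanon} their maps are empty, so each is $(\top,p,\varnothing)$. Equal functions then force equal constants, hence equal $p$.

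\emph{Mixed case.} Suppose $R$ targets $\top$ but $S$ is nonterminal. Then $S$ denotes a constant. I need a lemma that a generated nonterminal edge with root variable $v$ depends essentially on $x_v$. I would prove this inside the same induction. By Lemma~\ref{lem:generated-reconstruction}, $S=\Norm(v,S_0,S_1)$. If $\llbracket S_0\rrbracket=\llbracket S_1\rrbracket$, then the inductive hypothesis on the lower cofactors gives $S_0=S_1$. Redundancy elimination in $\Norm$ would then have returned $S_0$, so $S$ would not have root $v$, a contradiction.

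\emph{Nonterminal case.} Suppose both are nonterminal with root variables $v$ and $w$. If $v\neq w$, say $v$ is higher, then $\llbracket S\rrbracket$ is independent of $x_v$, since every node reachable from $S$ lies below $v$ and maps only flip variables. But $\llbracket R\rrbracket$ depends on $x_v$ by the lemma above, so this case cannot occur. If $v=w$, set $R_c=\Cof(R,v,c)$ and $S_c=\Cof(S,v,c)$. By Lemma~\ref{lem:generated-reconstruction} these are in $\mathcal G$, and $R=\Norm(v,R_0,R_1)$, $S=\Norm(v,S_0,S_1)$. I also need the semantic correctness of $\Cof$: $\llbracket\Cof(e,v,c)\rrbracket=\llbracket e\rrbracket|_{x_v=c}$. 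This follows by unwinding Eq.~\eqref{eq:edge-sem} with Eqs.~\eqref{eq:branch-selection} and \eqref{eq:cof}, splitting on whether $v\in M$. It gives $\llbracket R_c\rrbracket=\llbracket S_c\rrbracket$. The cofactors have strictly smaller measure, so induction yields $R_c=S_c$ syntactically. Since $\Norm$ is a deterministic function of its arguments, with unique-table lookup keyed on the normalized structural key, $R=\Norm(v,R_0,R_1)=\Norm(v,S_0,S_1)=S$.

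The main obstacle is the independence and dependence claims in the mixed and nonterminal cases. They require that edges in $\mathcal G$ never carry flip bits for variables above their root in a way that matters. They also require that an edge with root $v$ truly depends on $x_v$. The latter needs a careful look at $\Norm$. Its redundancy test compares edges syntactically, and the inductive hypothesis converts semantic equality of cofactors into syntactic equality, so this should go through. I would also check that the constant low-child lifting rule does not break the $\Cof$ correctness identity. That is the step where I expect the bookkeeping to be most delicate.
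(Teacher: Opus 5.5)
Your proposal is correct and follows the paper's proof: induction on the level of the root variable, the reconstruction lemma to pass to cofactors, redundancy elimination in $\Norm$ to rule out equal-semantics generated edges with different root levels, and determinism of $\Norm$ when the roots coincide. Your separate ``essential dependence'' lemma is the paper's case in which only one root lies at level $k$ (the paper treats terminals as lying below all levels, so your terminal and nonterminal subcases merge), and the constant low-child lifting rule you flag does not threaten the correctness of $\Cof$, which depends only on the node semantics.
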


\begin{proof}
	The forward direction is immediate: identical edges have identical
	semantics.
	
	For the reverse direction, suppose that $R$ and $S$ represent the same
	Boolean function. We prove, by induction on the variable level $k$, that
	the claim holds for all generated edges whose root variables lie at or
	below level $k$, with terminal edges treated as lying below all variable
	levels.
	
	For the base case, both edges are terminal. XBDD uses a single terminal
	node $\top$, and terminal normalization removes all flip-map information.
	Hence constant 1 is represented only by
	$(\top,0,\varnothing)$ and constant 0 only by
	$(\top,1,\varnothing)$. Therefore, two generated edges representing the
	same constant must be identical.
	
	Now assume that the statement holds below level $k$, and consider
	generated edges $R$ and $S$ whose root variables lie at or below level
	$k$. If both root variables lie below level $k$, the result follows
	directly from the induction hypothesis.
	
	Suppose next that exactly one root variable, say that of $R$, is at level
	$k$, and let it be $v$. Since the root variable of $S$ lies below $v$,
	and flip maps do not introduce variable dependencies,
	$\llbracket S\rrbracket$ is independent of $v$. Let
	\begin{equation}
		R_c=\Cof(R,v,c),
		\qquad c\in\{0,1\}.
	\end{equation}
	By Lemma~\ref{lem:generated-reconstruction}, $R_0,R_1\in\mathcal G$ and
	their root variables lie below level $k$. Because $R$ and $S$ have the
	same semantics and $S$ is independent of $v$,
	\begin{equation}
		\llbracket R_0\rrbracket=\llbracket R_1\rrbracket.
	\end{equation}
	The induction hypothesis gives $R_0=R_1$. The redundancy-elimination
	rule of $\Norm$ and Lemma~\ref{lem:generated-reconstruction} then imply
	$R=\Norm(v,R_0,R_1)=R_0$, contradicting that the root variable of $R$ is
	at level $k$. Hence this case is impossible.
	
	It remains to consider the case where both root variables are at level
	$k$. Under the fixed variable order, both are the same variable $v$.
	Define
	\begin{equation}
		R_c=\Cof(R,v,c),
		\qquad
		S_c=\Cof(S,v,c),
		\qquad c\in\{0,1\}.
	\end{equation}
	By Lemma~\ref{lem:generated-reconstruction}, these cofactors belong to
	$\mathcal G$ and have root variables below level $k$. Since $R$ and $S$
	have the same semantics, the semantics of $\Cof$ gives
	\begin{equation}
		\llbracket R_c\rrbracket
		=
		\llbracket S_c\rrbracket,
		\qquad c\in\{0,1\}.
	\end{equation}
	By the induction hypothesis, $R_0=S_0$ and $R_1=S_1$. Using
	Lemma~\ref{lem:generated-reconstruction},
	\begin{equation}
		\begin{aligned}
			R
			&=\Norm(v,R_0,R_1)\\
			&=\Norm(v,S_0,S_1)\\
			&=S.
		\end{aligned}
	\end{equation}
	Hence two generated edges represent the same Boolean function if and
	only if they are identical.
\end{proof}

The theorem directly preserves the main practical benefit of ROBDD
canonicity. A user expression is translated into a sequence of XITE
calls, so its result belongs to $\mathcal G$. Therefore, once two
functions have been constructed, checking whether they are equivalent
requires only a comparison of their root edges.

Constructive uniqueness is intentionally weaker than requiring every
syntactically valid XBDD edge for a function to be identical. For
example, some node functions may possess input/output polarity
symmetries, allowing different combinations of complement bits and flip
maps to denote the same function. Eliminating all such equivalent edge
forms would require detecting these symmetries during node
construction and selecting a globally preferred representative. Such
checks would make normalization substantially more expensive and would
be placed on the critical path of every XITE call.

XBDD instead uses lightweight local normalization and guarantees
uniqueness for all representations produced through its construction
interface. This design preserves constant-time equivalence checking for
normally constructed XBDD objects while avoiding expensive global
symmetry analysis.

\subsection{Node Compression with Variable-Flip Maps}
\label{subsec:compression}

Variable-flip maps extend node sharing beyond identical and complementary
sub-functions. In particular, sub-functions that differ only in the
polarities of selected input variables can reuse the same underlying
nodes, with their differences encoded in the incoming edge maps. We first
illustrate this mechanism with a small example and then show that the
resulting compression can become asymptotically significant.

Consider the variable order $x_3 \succ x_2 \succ x_1$ and the two
functions
\begin{equation}
	f_1=x_1x_2x_3,\qquad
	f_2=x_1\overline{x_2}x_3,
	\label{eq:ex-f}
\end{equation}
which differ only in the polarity of $x_2$. Splitting them on $x_3$
produces the high cofactors $x_1x_2$ and
$x_1\overline{x_2}$. These two cofactors are neither identical nor
complementary and therefore require distinct nodes in an ROBDD with
complement edges. In XBDD, however, normalization moves their
input-polarity difference to the incoming edge map. Both cofactors
therefore share the same underlying node $h$:
\begin{equation}
	\mathcal C(x_1x_2)
	=(h,1,\varnothing),\qquad
	\mathcal C(x_1\overline{x_2})
	=(h,1,\{x_2\}).
	\label{eq:ex-cof}
\end{equation}
The structural representation is identical; only the edge map records
that $x_2$ is interpreted with the opposite polarity in the second
cofactor.

The same sharing propagates to the next level. Since both $f_1$ and
$f_2$ have the constant-$0$ low cofactor with respect to $x_3$,
normalizing the two pairs of children again produces the same underlying
node $u$. Consequently,
\begin{equation}
	\mathcal C(f_1)
	=(u,1,\varnothing),\qquad
	\mathcal C(f_2)
	=(u,1,\{x_2\}).
	\label{eq:ex-root}
\end{equation}
Thus, a polarity difference that would otherwise require distinct
subgraphs is represented entirely by an edge attribute in XBDD.
Fig.~\ref{fig:compare} illustrates this additional sharing.

\begin{figure}[!t]
	\centering
	\subfloat[]{%
		\includegraphics[width=0.48\columnwidth]{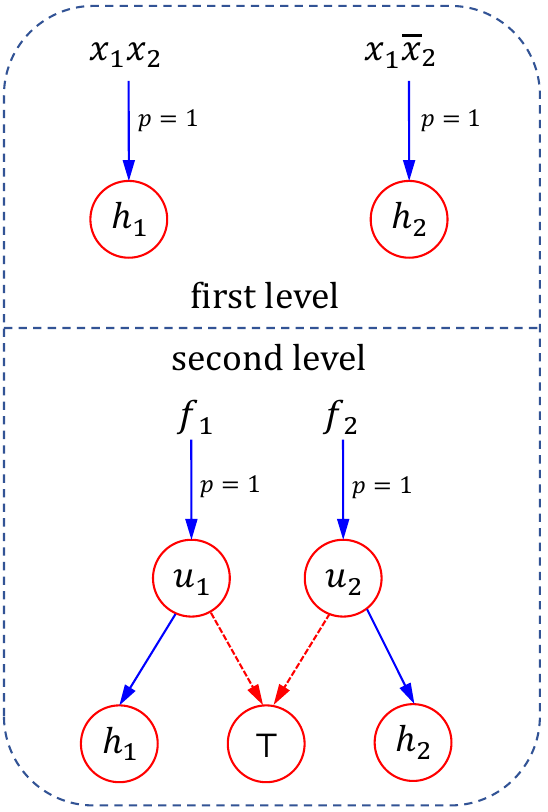}%
		\label{fig:robdd}}
	\hfill
	\subfloat[]{%
		\includegraphics[width=0.48\columnwidth]{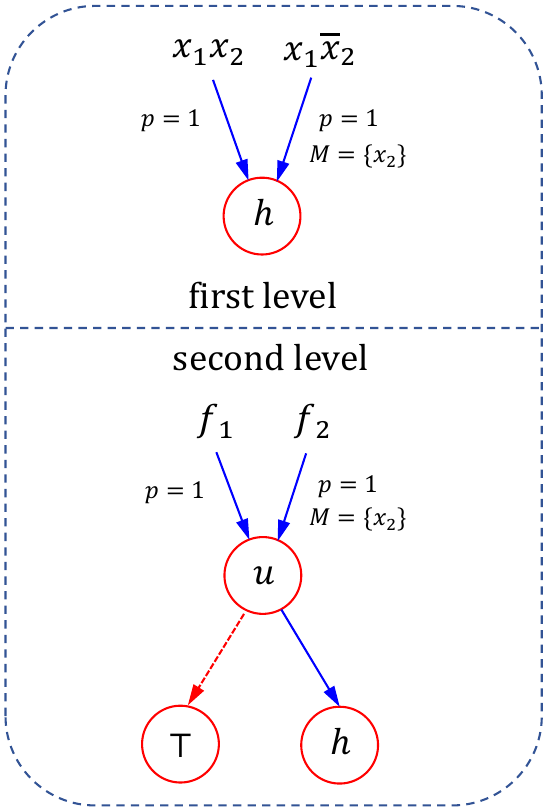}%
		\label{fig:xbdd}}
	\caption{Node sharing for functions with different input polarities:
		(a) an ROBDD with complement edges requires separate nodes for
		the polarity variants, whereas (b) XBDD records the polarity
		difference in an edge map and allows the underlying nodes to be
		shared.}
	\label{fig:compare}
\end{figure}

Although the example above saves only a few nodes, the same mechanism can
repeat across many variables and lead to an asymptotic reduction in
representation size. Consider the following family of Boolean functions:
\begin{equation}
	F_k(c_1,\ldots,c_k,x_1,\ldots,x_k)
	=
	\bigwedge_{j=1}^{k}(x_j\oplus c_j),
	\label{eq:exp-family}
\end{equation}
under the variable order
\begin{equation}
	c_1\succ c_2\succ\cdots\succ c_k
	\succ x_1\succ x_2\succ\cdots\succ x_k.
	\label{eq:exp-order}
\end{equation}
Here, each $c_j$ determines the polarity in which $x_j$ is interpreted.
After assigning the $k$ variables $c_1,\ldots,c_k$, the remaining
function takes the form
\begin{equation}
	\bigwedge_{j=1}^{k}(x_j\oplus b_j),
	\qquad
	(b_1,\ldots,b_k)\in\{0,1\}^{k}.
	\label{eq:polarity-variants}
\end{equation}
Hence, the upper part of the diagram can expose $2^k$ different
polarity variants of the same conjunction structure.

\medskip
\noindent\textbf{Proposition 1.}
Under the variable order in \eqref{eq:exp-order}, a complement-edge
ROBDD for $F_k$ requires $\Omega(2^k)$ nodes, whereas $F_k$ admits an
XBDD representation with $O(k)$ nodes.

\medskip
\noindent\textit{Proof sketch.}
For every assignment $(b_1,\ldots,b_k)$ to the $c$ variables,
the corresponding residual function is the conjunction in
\eqref{eq:polarity-variants}. These $2^k$ residual functions are
pairwise distinct. Moreover, for $k>1$, no two such conjunctions are
complements of each other. Complement edges therefore cannot collapse
them into a common representation, yielding exponentially many
distinct residual structures.

In XBDD, all these residual functions share the same conjunction
structure over $x_1,\ldots,x_k$. Their only differences are the input
polarities specified by $(b_1,\ldots,b_k)$, which can be encoded in the
variable-flip maps of their incoming edges. The shared conjunction
requires only a linear number of nodes. The decisions on
$c_1,\ldots,c_k$ likewise require only a linear number of nodes, with
their effects propagated through edge maps. Therefore, the complete
XBDD representation contains $O(k)$ nodes.
\hfill$\square$

This family demonstrates that variable-flip maps do more than provide
local node savings: when polarity variants repeatedly occur across
multiple levels, they can reduce an exponential number of structurally
similar sub-functions to a linear-size shared representation.
Section~\ref{subsec:expcompress} experimentally validates this
exponential-to-linear behavior. For general circuits, the achievable
reduction naturally depends on how frequently sub-functions differ
primarily in input polarity.

\section{Implementation}
\label{sec:impl}

The variable-flip map enables XBDD to exploit additional structural
sharing, but it also introduces extra metadata and map operations during
XBDD construction. A practical implementation must therefore ensure that
the cost of maintaining these transformations does not offset the memory
savings obtained from node reduction. To this end, XBDD combines compact
map encoding, memory-efficient core data structures, and several runtime
optimizations.

\subsection{Compact Variable-Flip Map Representation}
\label{subsec:map-impl}

Variable-flip maps are frequently manipulated during normalization and
cofactor computation and may appear in a large number of nodes and cache
entries. Directly storing a set of variable identifiers on each edge
would therefore introduce considerable storage and lookup overhead. XBDD
addresses this issue through bitmap encoding and map interning.

\medskip
\noindent\textbf{Bitmap encoding.}
Since variable identifiers occupy a contiguous range $[0,n)$, a flip map
can be represented naturally as a bitmap. For $n$ variables, XBDD stores
it in $w=\left\lceil\frac{n}{64}\right\rceil$ machine words of 64 bits
each, where bit $i$ is set if and only if variable $x_i$ belongs to the
map.

This representation makes the common map operations inexpensive.
Symmetric difference, which is heavily used by normalization and
cofactoring, becomes a word-wise XOR. Equality is implemented by
word-wise comparison, whereas insertion, deletion, and membership testing
require only a bit operation on the corresponding word. The costs of
these operations are summarized in Table~\ref{tab:bitmap}.

\begin{table}[!t]
	\caption{Common operations on the bitmap representation.}
	\label{tab:bitmap}
	\centering
	\begin{tabular}{lcc}
		\hline
		Operation & Implementation & Complexity\\
		\hline
		$A\triangle B$ & word-wise XOR & $O(w)$\\
		Equality & word-wise comparison & $O(w)$\\
		Insert/erase/test & bit operation & $O(1)$\\
		Hash & word-wise folding & $O(w)$\\
		\hline
	\end{tabular}
\end{table}

When $n\leq64$, the entire map fits into one machine word, making the
major map operations effectively constant-time.

\medskip
\noindent\textbf{Map interning.}
Although bitmap encoding makes individual maps compact, repeatedly
storing the same bitmap on many edges would still incur substantial
memory overhead. In practice, many edges reuse the same flip patterns.
XBDD therefore maintains a global \emph{map pool}, where each distinct
bitmap is stored once and assigned a 32-bit handle $\iota(M)$, with
$\iota(\varnothing)=0$.

Nodes and computed-cache entries store only this handle rather than the
complete bitmap. Besides reducing duplicate storage, interning also turns
map equality into an integer comparison:
\begin{equation}
	M_1=M_2
	\iff
	\iota(M_1)=\iota(M_2).
\end{equation}
The empty map is handled as a common fast path. Since its handle is zero,
an edge without an input-polarity transformation can be identified
without accessing the map pool.

This representation is especially useful because maps occur not only in
nodes but also in computed-cache keys and results. Section~\ref{subsec:intern}
evaluates the map reuse rate and the effectiveness of interning in
practice.

\subsection{Memory-Efficient Core Data Structures}
\label{subsec:memory_impl}

XBDD construction may generate millions of nodes and hash-table entries,
so even a small increase in per-object storage can lead to substantial
memory overhead. XBDD therefore organizes its node pool and unique table
around compact fixed-width representations.

\medskip
\noindent\textbf{Compact node representation.}
All nodes are stored in a contiguous node pool and referenced using
32-bit \texttt{NodeId}s rather than 64-bit memory pointers. The terminal
node is fixed at index 0, making terminal detection a simple integer
comparison.

The complement flag is packed into the high bit of the child identifier,
so the target node and output polarity occupy a single 32-bit field.
Moreover, the normalized node representation described in
Section~\ref{subsec:norm} guarantees that the low edge carries an empty
variable-flip map. Therefore, an internal node stores only one 32-bit map
handle for the relative high-edge map.

Together with a 16-bit reference counter and a 16-bit variable-level
field, each XBDD node occupies 16 bytes. The reference counter is
saturating: once the maximum value is reached, the node is conservatively
treated as permanently referenced rather than allowing the counter to
overflow.

An important consequence is that normalization serves not only to obtain
a deterministic representation but also to reduce implementation cost.
Without the normalized low edge, each node would require two independent
map handles instead of one.

\medskip
\noindent\textbf{Level-partitioned unique table.}
As in conventional ROBDD packages, XBDD uses a unique table to ensure
that structurally identical normalized nodes are stored only once. Rather
than maintaining a single global table, XBDD partitions the unique table
by variable level.

Each level owns an independent open-addressing hash table. For a fixed
level, a node is identified by
\begin{equation}
	(low,\;high,\;q,\;\iota(D)),
	\label{eq:unique-key}
\end{equation}
where $low$ and $high$ are child identifiers, $q$ is the relative
complement bit, and $\iota(D)$ is the handle of the relative high-edge
map. Since the variable level is implicit in the selected partition, it
does not need to be included in the lookup key.

Partitioning also improves implementation efficiency. Table growth and
rehashing affect only one level rather than the entire unique table, and
recursive XBDD construction often accesses nodes from the same or nearby
levels consecutively, providing better memory locality.

\subsection{Runtime Optimizations}
\label{subsec:runtime_impl}

The compact representations above control the memory overhead introduced
by variable-flip maps. Runtime overhead, however, may still arise from
memoization, recursive Boolean operations, and reclamation of temporary
nodes. XBDD therefore applies several optimizations to these common
execution paths.

\medskip
\noindent\textbf{Compact and adaptive computed cache.}
Recursive Boolean operations repeatedly encounter identical
sub-problems, making memoization essential for efficient construction.
XBDD represents each edge in the cache using a 32-bit node/complement
field and a 32-bit map handle. Consequently, an ITE cache entry containing
three input edges and one result edge occupies 32 bytes.

The cache is direct-mapped: each key hashes to one slot, and a collision
simply replaces the existing entry. This organization avoids pointer
chasing and complex replacement policies. Its capacity is adjusted
according to the scale of the current construction and cache behavior, so
small instances do not reserve excessive memory while larger instances
can benefit from a larger memoization space.

To reduce random memory accesses on cache misses, XBDD additionally
maintains a one-byte partial-hash tag for each cache slot. A lookup first
checks this tag; a mismatch immediately indicates a miss without loading
the complete 32-byte entry. Only a matching tag triggers full-key
comparison. Since the tag array is much smaller than the main cache, it
has better cache locality and acts as a lightweight filter for
miss-dominated accesses.

\medskip
\noindent\textbf{Adaptive garbage collection.}
XBDD manages node lifetime using reference counting. When the reference
count of a node reaches zero, the node becomes dead but is not immediately
removed from the unique table. Retaining recently dead nodes allows them
to be reused if the same intermediate sub-function reappears shortly
afterward, avoiding unnecessary reconstruction.

Garbage collection is triggered when the number of dead nodes exceeds an
adaptive threshold. The threshold is adjusted according to the
effectiveness of previous collections: ineffective collections postpone
the next cleanup, while collections that reclaim a large fraction of
nodes allow more aggressive reclamation. The threshold also scales with
the number of live nodes to prevent excessive collection frequency for
large XBDDs.

The map pool is reclaimed together with node and cache cleanup. Maps
referenced by surviving nodes or valid cache entries are retained, while
unreferenced maps can be removed without requiring a separate full
reclamation pass.

\medskip
\noindent\textbf{Skipping transient cache entries.}
Not every intermediate recursive result benefits from memoization. Some
operands exist only along the current recursion path and are unlikely to
be reused. Caching such sub-problems introduces additional hashing and
memory accesses while potentially replacing more useful entries.

XBDD uses reference counts as a lightweight heuristic for identifying
these cases. When all operands of an operation have reference count one,
the operation is treated as transient and bypasses both cache lookup and
cache insertion. This reduces cache traffic for short-lived intermediate
states and preserves cache capacity for more reusable sub-functions.

Overall, these optimizations are designed to keep the additional
metadata and runtime cost of variable-flip maps small. Bitmap encoding
and interning compact the map representation, fixed-width node and
unique-table layouts limit memory overhead, and cache and recursion
optimizations reduce the additional work introduced during construction.
The resulting space--time tradeoff is evaluated in
Section~\ref{sec:exp}.

\section{Evaluation}
\label{sec:exp}

We evaluate XBDD from four perspectives. First, we examine whether the additional sharing enabled by variable-flip maps reduces node count and memory consumption on practical circuits. Second, we study the runtime cost associated with this more expressive representation. Third, we evaluate whether flip maps exhibit sufficient reuse to make map interning effective. Finally, we validate the exponential-to-linear compression predicted in Section~\ref{subsec:compression} on the polarity-sensitive function family introduced there.

\subsection{Experimental Setup}
\label{subsec:setup}

We evaluate XBDD on nine circuits from the IWLS'93 benchmark suite \cite{ref_iwls93} and
compare it with CUDD \cite{ref_cudd}, a widely used and mature BDD package. The evaluated
circuits include \texttt{C1355}, \texttt{C1908}, \texttt{C3540},
\texttt{C499}, \texttt{C880}, \texttt{cm150a}, \texttt{comp},
\texttt{mux}, and \texttt{my\_adder}, covering circuits with different
sizes and different degrees of polarity-related structural sharing.

Both implementations construct BDDs from the same BLIF circuit
descriptions. To isolate the effect of the representation itself, dynamic
variable reordering is disabled, and both implementations use the same
fixed variable order. Both are pinned to the same CPU core to reduce
scheduling interference. Each experiment is repeated 50 times, and the
average result is reported.

We evaluate four primary metrics. \emph{Final nodes} measures the number
of nodes required to represent all output functions after construction and
therefore directly reflects representation compactness. \emph{Peak nodes}
measures the maximum number of nodes allocated during construction,
capturing the pressure caused by intermediate results. \emph{Memory}
reports the total memory occupied by the BDD package, including the node
pool, unique table, computed cache, and, for XBDD, the map pool.
\emph{Wall time} measures the complete BDD construction time.

\subsection{Overall Performance}
\label{subsec:overall}

Table~\ref{tab:overall} summarizes the overall results. The most important
observation is that XBDD consistently reduces space consumption, although
the magnitude of the reduction depends strongly on the structure of the
circuit.

\begin{table*}[!t]
	\caption{Overall comparison between CUDD and XBDD on IWLS'93.
		Ratios are XBDD/CUDD; smaller values are better.}
	\label{tab:overall}
	\centering
	\footnotesize
	\begin{tabular}{l|rrr|rrr|rrr|rrr}
		\hline
		\multirow{2}{*}{Circuit}
		& \multicolumn{3}{c|}{Time (ms)}
		& \multicolumn{3}{c|}{Peak Nodes}
		& \multicolumn{3}{c|}{Final Nodes}
		& \multicolumn{3}{c}{Memory (MB)}\\
		& CUDD & XBDD & Ratio
		& CUDD & XBDD & Ratio
		& CUDD & XBDD & Ratio
		& CUDD & XBDD & Ratio\\
		\hline
		C1355     & 42.99 & 60.39 & 1.41 & 187026  & 13312   & 0.07 & 45922  & 3257   & 0.07 & 16.23  & 8.94  & 0.55\\
		C1908     & 76.50 & 98.21 & 1.28 & 164542  & 44032   & 0.27 & 36007  & 12633  & 0.35 & 15.50  & 9.65  & 0.62\\
		C3540     & 2224.03 & 2505.77 & 1.13 & 2907590 & 2311168 & 0.79 & 604559 & 477635 & 0.79 & 162.60 & 120.81 & 0.74\\
		C499      & 35.35 & 49.11 & 1.39 & 145124  & 12288   & 0.08 & 45922  & 3257   & 0.07 & 14.86  & 8.92  & 0.60\\
		C880      & 620.41 & 712.46 & 1.15 & 1370502 & 1199104 & 0.87 & 346660 & 330622 & 0.95 & 79.68 & 60.98 & 0.77\\
		cm150a    & 83.73 & 108.00 & 1.29 & 231994 & 128000 & 0.55 & 131071 & 82701 & 0.63 & 17.86 & 11.30 & 0.63\\
		comp      & 272.75 & 206.98 & 0.76 & 673498 & 267264 & 0.40 & 458698 & 262141 & 0.57 & 40.60 & 15.71 & 0.39\\
		mux       & 95.01 & 115.80 & 1.22 & 229950 & 126976 & 0.55 & 131071 & 82701 & 0.63 & 17.80 & 11.26 & 0.63\\
		my\_adder & 219.44 & 212.12 & 0.97 & 656124 & 394240 & 0.60 & 327677 & 327677 & 1.00 & 32.07 & 18.39 & 0.57\\
		\hline
	\end{tabular}
\end{table*}

The reduction in final nodes directly demonstrates the additional sharing
enabled by variable-flip maps. The strongest examples are \texttt{C1355}
and \texttt{C499}, where XBDD requires only 7\% of the nodes used by CUDD,
corresponding to a reduction of approximately 93\%. XBDD also substantially
reduces the final node count on \texttt{C1908}, \texttt{cm150a},
\texttt{comp}, and \texttt{mux}. The benefit is smaller on \texttt{C880},
while \texttt{my\_adder} has the same final node count in both
representations.

This variation is expected. A flip map does not guarantee compression for
every Boolean function; it creates additional sharing only when
sub-functions differ primarily in the polarity of their inputs. Circuits
containing more such structures therefore benefit more strongly.

Peak node counts exhibit a similar trend, but reveal an additional
benefit. Even for \texttt{my\_adder}, where the final node count is
unchanged, XBDD reduces the peak node count from 656,124 to 394,240.
Therefore, variable-flip sharing can reduce intermediate structures during
construction even when it does not reduce the final representation.

The reduction in node count, together with the compact 16-byte node
representation described in Section~\ref{subsec:memory_impl}, translates
into lower overall memory consumption. Across all evaluated circuits,
XBDD uses only 39\%--77\% of the memory consumed by CUDD. This shows that
the additional map metadata does not eliminate the space benefit obtained
from structural sharing.

The itemized data also reveals another phenomenon: on smaller circuits, the memory of XBDD is dominated by the fixed-capacity computed cache---the initial fixed capacity of 262144 slots alone occupies about 8 MB. For example, on \texttt{C1355} and \texttt{C499}, although the node count drops by about 93\%, the memory only drops by about 40\%; the reason is that when the node pool is compressed to an extremely small size, the drop in memory touches the lower bound determined by fixed overheads such as the computed cache, and thus no longer decreases in the same proportion as the node count.

\subsection{Space--Time Tradeoff}
\label{subsec:tradeoff}

The main cost of XBDD is additional computation during construction.
Compared with CUDD, every recursive operation may require map membership
tests, symmetric differences, and map interning, and both the unique table and the computed table additionally require hashing over the maps. The key question is therefore whether the resulting space reduction justifies this runtime overhead.

Fig.~\ref{fig:tradeoff} summarizes this tradeoff by plotting, for each
circuit, the normalized wall time against the normalized memory
consumption. Both values are normalized to CUDD, so $(1,1)$ represents
the CUDD baseline. Points below the horizontal line at 1 consume less
memory, while points to the left of the vertical line at 1 are also
faster than CUDD.

\begin{figure*}[!t]
	\centering
	\begin{minipage}{0.32\textwidth}
		\centering
		\includegraphics[width=\linewidth]{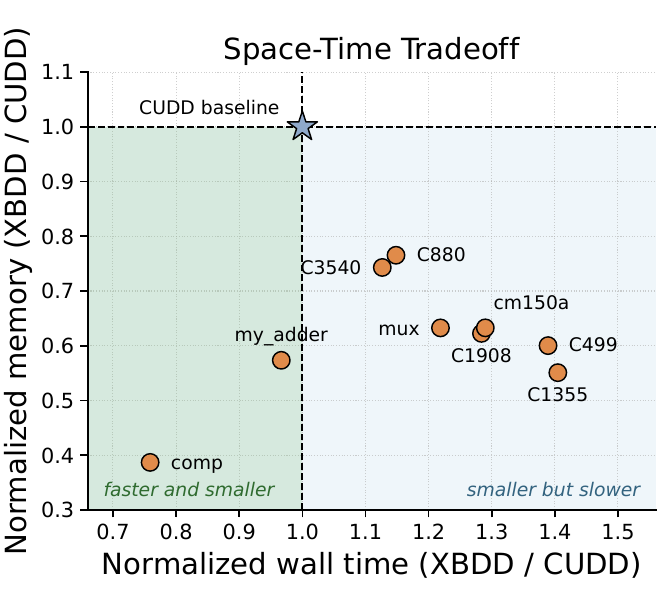}
		\caption{Space--time tradeoff of XBDD relative to CUDD.}
		\label{fig:tradeoff}
	\end{minipage}
	\hfill
	\begin{minipage}{0.32\textwidth}
		\centering
		\includegraphics[width=\linewidth]{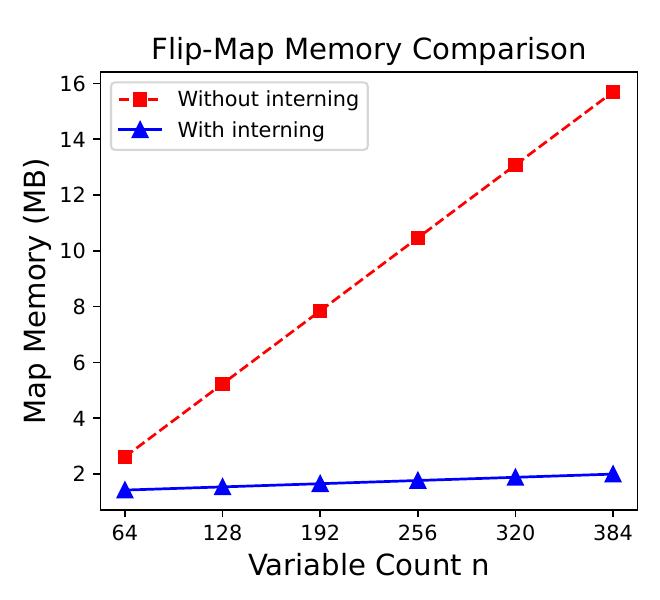}
		\caption{Total flip-map storage for C1355 as a function of the
			variable count $n$.}
		\label{fig:intern-space}
	\end{minipage}
	\hfill
	\begin{minipage}{0.32\textwidth}
		\centering
		\includegraphics[width=\linewidth]{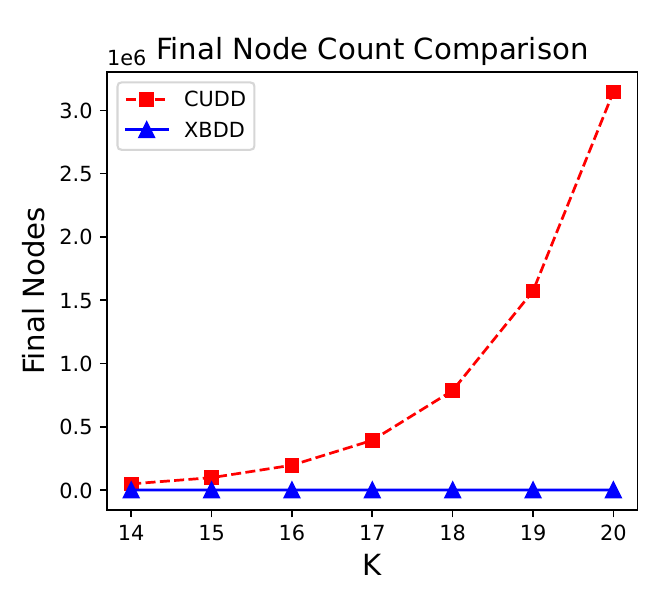}
		\caption{Final node count of XBDD and CUDD for the
			polarity-sensitive function family $F_k$.}
		\label{fig:expcompress}
	\end{minipage}
\end{figure*}

All evaluated circuits lie below the memory baseline, confirming that XBDD
consistently provides a net space saving. For smaller circuits such as
\texttt{C1355} and \texttt{C499}, this saving comes with a relatively
visible runtime overhead: XBDD is 41\% and 39\% slower, respectively.
However, these circuits also exhibit the strongest structural compression,
with their final node counts reduced by approximately 93\%.

The relative runtime overhead becomes smaller on larger circuits. For
example, XBDD is only 13\% slower on \texttt{C3540}, while reducing its
memory consumption by 26\%. The reason is that, as the construction grows,
operations such as node allocation, table expansion, and hash-table access
account for a larger fraction of the total runtime, reducing the relative
weight of map manipulation.

In some cases, structural compression can even compensate for the map
overhead. On \texttt{my\_adder}, XBDD and CUDD have nearly identical
construction times, while XBDD uses only 57\% of the memory. On
\texttt{comp}, XBDD is 24\% faster and uses only 39\% of the memory.
The reduction in intermediate nodes can decrease node-pool and table
expansions, partially or completely offsetting the additional cost of map
operations.

These results indicate that XBDD does not simply exchange memory for a
fixed runtime penalty. Instead, the tradeoff depends on the amount of
structural sharing exposed by the flip maps and on the scale of the BDD
construction. The runtime overhead is most visible on small circuits,
whereas larger constructions tend toward parity and may occasionally
benefit from the reduced number of intermediate nodes.

\subsection{Map Interning Efficiency}
\label{subsec:intern}

Map interning is intended to prevent the additional edge attribute from
becoming a new source of memory overhead. Its effectiveness depends on a
simple assumption: although maps may be referenced frequently, the number
of \emph{distinct} maps should remain relatively small.

To examine this property, we record the number of distinct maps stored in
the map pool, the total number of map references before deduplication, and
the dynamic map-pool hit rate. Table~\ref{tab:intern} summarizes the
results.

\begin{table}[!t]
	\caption{Compression and reuse achieved by map interning.}
	\label{tab:intern}
	\centering
	\footnotesize
	\begin{tabular}{l|r|r|r|r}
		\hline
		Circuit & Pool & Total & Comp. & Hit\\
		& Entries & Maps & Ratio & Rate\\
		\hline
		C1355     & 15024 & 342888 & 95.62\% & 98.27\%\\
		C1908     & 5159  & 197907 & 97.39\% & 98.84\%\\
		C3540     & 4754  & 575384 & 99.17\% & 99.85\%\\
		C499      & 15231 & 281071 & 94.58\% & 97.82\%\\
		C880      & 3309  & 312072 & 98.94\% & 99.84\%\\
		cm150a    & 303   & 480294 & 99.94\% & 99.96\%\\
		comp      & 34855 & 817152 & 95.73\% & 98.71\%\\
		mux       & 199   & 436628 & 99.95\% & 99.98\%\\
		my\_adder & 65567 & 645115 & 89.84\% & 96.24\%\\
		\hline
	\end{tabular}
\end{table}

The results show substantial map reuse across all evaluated circuits.
Interning eliminates between 89.84\% and 99.95\% of duplicate map
instances. In several circuits, hundreds of thousands of map references
collapse into only a few hundred or a few thousand distinct maps. For
example, \texttt{mux} contains 436,628 map references but only 199
distinct maps.

This high reuse rate directly supports the use of compact handles.
Instead of storing a complete bitmap at every occurrence, nodes and cache
entries store only a 32-bit map identifier, while each distinct bitmap is
stored once in the global pool.

The dynamic hit rate is also consistently high, ranging from 96.24\% to
99.98\%. Thus, most interning requests find an existing map rather than
creating a new one. The map pool consequently remains small, and map
interning introduces limited allocation and rehashing overhead.

The storage consequence of this reuse grows with the variable count.
Because a map is stored as an integral number of 64-bit words, its exact
size is $b=8\left\lceil\frac{n}{64}\right\rceil\ \text{bytes}$. Without interning, every one of the $M$ map references stores a full
bitmap, for $M\cdot b$ bytes in total. With interning, a reference
stores only a 32-bit handle and each of the $P$ distinct bitmaps is stored
once in the pool, for $4M + P\cdot b$ bytes.
Fig.~\ref{fig:intern-space} instantiates both expressions with the \texttt{C1355}
counters of Table~\ref{tab:intern} ($M = 342{,}888$, $P = 15{,}024$). The
uninterned cost grows with the full reference count and reaches 15.7\,MB
at $n = 384$, whereas the interned cost grows with the pool size only and
stays below 2.1\,MB, a $7.9\times$ reduction. Since the two slopes differ
by the factor $M/P$, the gap widens linearly in $n$: interning replaces a
per-reference dependence on the variable count with a constant 4-byte
handle, leaving the pool as the only $n$-dependent term.

Overall, these results confirm that flip maps are highly repetitive in
practice. Map interning therefore converts what could have been a
substantial per-edge storage cost into a small shared metadata structure.

\subsection{Exponential Compression Case Study}
\label{subsec:expcompress}

The benchmark results above demonstrate that variable-flip maps provide
substantial compression on practical circuits. We finally examine whether
they can also produce the asymptotic compression predicted in
Section~\ref{subsec:compression}.

Recall the function family
\begin{equation}
	F_k(c_1,\ldots,c_k,x_1,\ldots,x_k)
	=
	\bigwedge_{j=1}^{k}(x_j\oplus c_j),
	\label{eq:case-family}
\end{equation}
under the fixed variable order
\begin{equation}
	c_1\succ\cdots\succ c_k
	\succ x_1\succ\cdots\succ x_k.
\end{equation}
The $c_j$ variables determine the polarities of the $x_j$ variables.
A complement-edge-only BDD cannot directly share all these polarity
variants, whereas XBDD can encode their differences in edge maps.

We construct $F_k$ for $k=14,\ldots,20$ using both XBDD and CUDD, with
dynamic variable reordering disabled and the same fixed variable order.
Fig.~\ref{fig:expcompress} reports the resulting final node counts.

The two representations exhibit fundamentally different growth trends.
The CUDD node count approximately doubles as $k$ increases, consistent
with exponential growth. In contrast, the XBDD node count grows only
linearly.

At $k=20$, CUDD requires 3,145,725 nodes, whereas XBDD requires only
60 nodes. The difference is therefore not merely a constant-factor
reduction. The experiment confirms the structural result from
Section~\ref{subsec:compression}: for this function family, per-edge
variable-flip maps reduce the representation from exponential to linear
size.

This case study also explains why the gains on practical circuits vary.
Real circuits do not necessarily exhibit this polarity structure at every
level, but whenever similar local structures occur, XBDD can exploit the
same sharing mechanism. The practical benchmark results can therefore be
viewed as different degrees of this underlying structural opportunity.

\section{Related Work}
\label{sec:related}

\noindent\textbf{Decision-diagram representations.}
Since Bryant established ROBDDs as a canonical representation of Boolean
functions \cite{ref_bryant,ref_bryant_survey}, many decision-diagram
variants have been proposed to exploit different types of structure.
Zero-suppressed BDDs (ZBDDs) are particularly effective for representing
sparse sets \cite{ref_minato}, while multi-terminal BDDs extend terminal
values beyond Boolean constants \cite{ref_mtbdd}. Free BDDs relax the
global variable-order restriction \cite{ref_fbdd}, and sentential decision
diagrams exploit structured decompositions to provide a more general
canonical representation \cite{ref_sdd}. Decision diagrams have also been
extended to other domains. For example, QMDDs and TDDs represent quantum
states and circuits \cite{ref_qmdd,ref_tdd}, while LIMDDs and LimTDDs
attach local transformations to edges to expose additional equivalences
between quantum substructures \cite{ref_limdd,ref_limtdd}. These approaches
change the representation to exploit application-specific forms of
redundancy. XBDD instead remains within the ROBDD framework and targets a
specific redundancy that frequently appears in Boolean functions:
sub-functions that share the same structure but differ in the polarities
of some input variables.

\noindent\textbf{Edge-augmented BDDs.}
A particularly relevant line of research increases sharing by attaching
additional information to BDD edges. Complement edges are the most widely
used example: a single complement bit allows a node representing $f$ to
also represent $\overline{f}$, avoiding a separate graph for the
complementary function \cite{ref_brace}. Such edges have become a standard
component of mature BDD implementations. More general edge annotations
have also been explored. Typed edges and shared BDDs associate different
transformations with edges, including output inversion, input inversion,
and variable shifting \cite{ref_madre,ref_minato_sbdd}. Other approaches
encode reduction information on edges, such as chain-reduced BDDs,
tagged BDDs, and edge-specified reduction techniques
\cite{ref_bryant_chain,ref_tbdd,ref_esrbdd,ref_cesrbdd}.

XBDD is closely related to this general idea of moving structural
differences from nodes to edges, but focuses specifically on
\emph{local input-polarity differences}. A conventional complement edge
encodes one output transformation $f(x)\rightarrow\overline{f(x)}$ whereas the variable-flip map considered in this work describes transformations of the form $f(x)\rightarrow f(x\oplus p)$ where $p$ may specify an arbitrary subset of input variables. More
importantly, XBDD integrates this representation into the ROBDD
construction framework by providing normalization and cofactor operations
that preserve a canonical representation, together with a compact
implementation based on bitmap encoding and map interning. This enables
input-polarity variants to share the same underlying nodes while retaining
the constant-time equivalence property of canonical BDD representations.

\noindent\textbf{Efficient BDD implementations.}
A complementary body of work focuses on improving the efficiency of BDD
construction without fundamentally changing the represented Boolean
function. Mature packages such as CUDD and BuDDy employ techniques
including unique tables, computed caches, complement edges, reference
counting, and garbage collection \cite{ref_cudd,ref_buddy}. Subsequent
work has improved memory locality and data layout through
memory-hierarchy-aware and cache-conscious designs
\cite{ref_sanghavi,ref_long,ref_modernhw}, as well as pointerless
representations that reduce node-storage overhead \cite{ref_janssen_ptr}.
Parallel implementations exploit multi-core processors, GPUs, and
external-memory systems to scale BDD manipulation to larger problems
\cite{ref_sylvan,ref_gpu,ref_adiar}. Other work reduces the number of
temporary nodes generated during construction \cite{ref_krauss}.
Variable ordering is another orthogonal direction because it can
dramatically affect BDD size; dynamic techniques such as sifting search
for improved orders during construction \cite{ref_rudell}.

\section{Conclusion}
\label{sec:conclusion}

This paper presents XBDD, an extension of ROBDD that exploits
input-polarity redundancy through per-edge variable-flip maps.
While complement edges allow functions with opposite output polarities
to share the same nodes, XBDD further enables sub-functions that differ
only in selected input polarities to reuse the same underlying structure.
Normalization and map-aware cofactor operations guarantee constructive
uniqueness of the generated edges, while
bitmap encoding, map interning, and compact data structures keep the
additional overhead small.

Experiments on IWLS'93 circuits show that XBDD substantially reduces both
node count and memory consumption compared with CUDD, while introducing
a moderate runtime overhead that decreases as construction scale grows.
For polarity-sensitive functions, XBDD can further reduce the
representation from exponential to linear size. Future work will study
the interaction between variable-flip maps and dynamic variable ordering,
as well as their applicability to larger and parallel BDD workloads.

\section*{Acknowledgment}
During the preparation of this manuscript, we used ChatGPT, a generative AI tool, to assist with grammar checking and language polishing.

\bibliographystyle{IEEEtran}
\bibliography{refs}

\end{document}